\title{Maximal Matching and Path Matching Counting in Polynomial Time for Graphs of Bounded Clique Width}
\author{Benjamin Hellouin de Menibus\inst{1} and Takeaki Uno\inst{2}}
\institute{ENS Lyon, France, {\tt benjamin.hellouin\_de\_menibus@ens-lyon.fr}
\and National Institute of Informatics, Tokyo 101--8430, Japan,
 {\tt uno@nii.jp}
}
\begin{document}
\maketitle
\begin{abstract}
In this paper, we provide polynomial-time algorithms for different extensions
 of the matching counting problem, namely maximal matchings, path
 matchings (linear forest) and paths, on graph classes of bounded clique-width.
For maximal matchings, we introduce matching-cover pairs to efficiently
 handle maximality in the local structure, and develop a polynomial
  time algorithm.
For path matchings, we develop a way to classify the path matchings in 
a polynomial number of equivalent classes.
Using these, we develop dynamic programing algorithms that run in 
 polynomial time of the graph size, but in exponential time of the clique-width.
In particular, we show that for a graph $G$ of $n$ vertices and clique-width
 $k$, these problems can be solved in $O(n^{f(k)})$ time where $f$ is
 exponential in $k$ or in $O(n^{g(l)})$ time where $g$ is linear or quadratic
 in $l$ if an $l$-expression for $G$ is given as input.
\end{abstract}

\section{Introduction}

Counting problems in graphs can be very difficult, i.e. $\#P$-hard in
 the general case, even for simple objects such as trees and independent sets.
Research on graph classes has been motivated by such ``hard" decision or
 optimization problems, and restricting the input to given graph classes
has led to numerous polynomial-time algorithms.
Despite this, only a few useful algorithms for counting problems exist, and these are relatively recent.\\

In this paper, we focus on maximal matching counting and path
 matching (linear forest) counting problems.
Matching counting and all extensions considered in this paper have been
 proved $\#P$-complete in the general case.
Some sparse graph classes such as planar graphs or graphs of bounded
 tree-width allow polynomial-time algorithms for perfect matching counting
 (see \cite{bib11} and \cite{bib1}); on the negative side, Valiant,
 when introducing the class $\#P$, proved that counting perfect matchings
 as well as general matchings in bipartite graphs was
 $\#P$-complete \cite{bib17,bib18}.
Valiant's proof concerning matchings has since been extended to 3-regular bipartite graphs \cite{bib8}, bipartite graphs of maximum degree 4 and bipartite planar graphs of maximum degree 6 \cite{bib16}.

The problem of counting perfect matchings in chordal and chordal bipartite graphs is also 
$\#P$-complete \cite{bib14}, but good results on independent sets \cite{bib13}
give the impression that the chordal structure could nevertheless be interesting 
regarding matching counting. This led us to focus on a related graph class, 
 the $(5,2)$-crossing-chordal graphs.
We especially make use of the bounded clique-width of this graph class.

Courcelle et al.~introduced clique-width in \cite{bib5} as a generalization
 of tree-width, and it attracted attention mainly for two reasons.
On the one hand, in a similar fashion as the tree width, putting a bound on
 the clique-width makes many difficult problems solvable in polynomial time
 (see for example \cite{bib6}). 
On the other hand, this class contains dense graphs as well as sparse graphs, 
 which makes for more general results.

Makowsky et al.~already proved as a consequence of a result in \cite{bib12}
 that matching counting on graphs of bounded clique-width is polynomial.
In this paper, we will extend this result by adapting their method to
 maximal matchings and path matchings.
Our algorithms are polynomial of the graph size, but exponential of the
 clique-width $k$, i.e., $O(n^{poly(k)})$ time.
It might be hard to develop a fixed parameter tractable algorithm such as an
 $O(c^{poly(k)}poly(n))$ time algorithm, since many graph
 algorithms, e.g. vertex coloring, have to spend $O(n^{poly(k)})$ time 
 unless FPT $\ne$ $W[1]$ \cite{bib9a}.

The existing matching counting algorithms can not be used to count 
maximal matchings directly.
The algorithms in \cite{bib12} classify matchings of local graphs according to their sizes and the colors of the endpoints, and then get information about larger graphs my merging the matchings.
However, in this way, each classified group may contain both matchings included in maximal matchings and those not included in any maximal matching.
Actually, it seems to be difficult to characterize the number of matchings included in some maximal matching, by using only their sizes and their endpoints.
In this paper, we introduce matching-cover pairs for this task. When we restrict a maximal matching to a subgraph, it can be decomposed into the matching edges belonging to the subgraph and end vertices of matching edges not included in the subgraph.
 From the maximality, the end vertices form a vertex cover of the edges of the subgraph. 
Thus, we count such pairs of matching and vertex cover according to their sizes and colors, and obtain a polynomial time algorithm for the problem.
 
For the problem of counting paths and path matchings, we have to have some way to handle 
 the connectivity of edge sets.
Actually, connectivity is not easy to handle; for example, 
checking for the existence of Hamiltonian path is equivalent to checking whether
the number of paths of length $n-1$ is larger than zero or not.
Gimenez et al. devised an algorithm based on Tutte polynomial computation 
 to count the number of forests in bounded-clique-width graphs in 
 sub-exponential time, running in $2^{O(n^c)}$ time for constant
 $c<1$ \cite{GmHlNy05}.
We use the properties of bounded-clique-width graphs so that we can 
 classify the path matchings in a polynomial number of groups of 
 equivalent path matchings, and thereby compute the number of paths
 and path matchings in polynomial time.

\section{Clique Width}

We shall introduce clique-width on undirected, non-empty labeled graphs by a construction method. Let $G_i$ be the subgraph of vertices labeled $i$ in a graph $G$. We define the singleton $S_i$ as the labeled graph with one vertex of label $i$ and no edge, and the following construction operations:

\begin{itemize}
\renewcommand{\labelitemi}{-}
\item Renaming : $\rho_{i\rightarrow j}(G)$ is $G$ where all labels $i$ are replaced by labels $j$;
\item Disjoint union : $(V_1,E_1)\oplus(V_2,E_2) = (V_1\cup V_2,E_1\cup E_2)$;
\item Edge creation : $\eta_{i,j}((V,E)) = (V,E\cup \{(v_1,v_2)\ |\ v_1 \in G_i, v_2\in G_j\})$.
\end{itemize}

The class of graphs with clique-width $\leq k$ is the smallest class containing the singletons $S_i$, closed under $\rho_{i\rightarrow j}, \oplus$ and  $\eta_{i,j}$ ($1\leq i,j \leq k$). In other words, the {\em clique-width} of a graph $G$, denoted as $cwd (G)$, is the minimal number of labels necessary to construct $G$ by using singletons and renaming, disjoint union and edge creation operations. \\

For an unlabeled graph $G$, we define its clique-width by labeling all vertices with label 1. This is necessarily the best labeling, since any labeling can be renamed to a monochromatic labeling. Note that the clique-width of a graph of order $n$ is at most $n$.

$(5,2)$-crossing-chordal graphs are known to have clique-width $\leq 3$ \cite{bib3}
 (we recall that a $(5,2)$-crossing-chordal graph is a graph where any cycle of length $\geq 5$ has a pair of crossing diagonals).
Other interesting results include: cographs are exactly the graphs with
 $cwd(G)\leq 2$, planar graphs of bounded diameter have bounded
 clique-widths, and any graph class of treewidth $\leq k$ also has a bounded
 clique-width of $\leq 3.2^{k-1}$ \cite{bib4}.
A complete review can be found in \cite{bib10}.\\

An {\em $l$-expression} is a term using
 $S_i, \rho_{i\rightarrow j}, \eta_{i,j}$ and $\oplus$ (with $i,j \leq l$) that respects the arity of each operation. It can be represented more conveniently in a tree structure, and we can inductively associate the current state of the construction to each node. If $G$ is the graph associated with the root, we say that this term is an $l$-expression for $G$, and it is a certificate that $G$ is of clique-width $\leq l$. An example is given in Fig.1.

\begin{figure}[t]
 \centering
  \begin{tikzpicture}
  [style/.style={circle,draw = black, inner sep=1pt,minimum size=3.5mm},
  small/.style={circle,draw = black, inner sep=0.5pt,minimum size=1.5mm}]
  
  \node (1) at (0,3) [style] {};
  \node (2) at (0,5) [style] {};
  \node (3) at (1,4) [style] {};
  \node (4) at (2,3) [style] {};
  \node (5) at (2,5) [style] {};
  \draw (4) -- (1) -- (2) -- (3) -- (4) -- (5) -- (2) (3) -- (5);

  \node (0) at (6,7) [style] {\scriptsize$\eta_{1,3}$};
  \node (1) at (6,6) [style] {\scriptsize{$\oplus$}};
  \node (10) at (5,5) [style] {\scriptsize$S_3$};
  \node (11) at (7,5) [style] {\scriptsize$\eta_{1,2}$};
  \node (12) at (7,4) [style] {\scriptsize{$\oplus$}};
  \node (120) at (6,3) [style] {\scriptsize{$\oplus$}};
  \node (121) at (8,3) [style] {\scriptsize$\rho_{2\to1}$};
  \node (1200) at (5,2) [style] {\scriptsize$S_2$};
  \node (1201) at (7,2) [style] {\scriptsize$S_2$};
  \node (122) at (8,2) [style] {\scriptsize$\eta_{1,2}$};
  \node (123) at (8,1) [style] {\scriptsize{$\oplus$}};
  \node (1230) at (7,0) [style] {\scriptsize$S_1$};
  \node (1231) at (9,0) [style] {\scriptsize$S_2$};
  \draw (0) -- (1) -- (10) (1) -- (11) -- (12) -- (120) -- (1200) (120) -- (1201) (12) -- (121) -- (122) -- (123) -- (1230) (123) -- (1231);

  \node (1) at (9,6.5) [small] {\tiny3};
  \node (2) at (9,7.5) [small] {\tiny2};
  \node (3) at (9.5,7) [small] {\tiny1};
  \node (4) at (10,6.5) [small] {\tiny2};
  \node (5) at (10,7.5) [small] {\tiny1};
  \draw (4) -- (1) -- (2) -- (3) -- (4) -- (5) -- (2) (3) -- (5);
  \draw [->, thick, >=stealth] (8.4,7) -- (6.7, 7);

  \node (2) at (9,5.5) [small] {\tiny2};
  \node (3) at (9.5,5) [small] {\tiny1};
  \node (4) at (10,4.5) [small] {\tiny2};
  \node (5) at (10,5.5) [small] {\tiny1};
  \draw (2) -- (3) -- (4) -- (5) -- (2) (3) -- (5);
  \draw [->, thick, >=stealth] (9,5) -- (7.6, 5);

  \node (1) at (9.5,2.75) [small] {\tiny1};
  \node (2) at (10,3.25) [small] {\tiny1};
  \draw (1) -- (2);
  \draw [->, thick, >=stealth] (9.3,3) -- (8.5, 3);

  \node (1) at (9.5,1.75) [small] {\tiny2};
  \node (2) at (10,2.25) [small] {\tiny1};
  \draw (1) -- (2);
  \draw [->, thick, >=stealth] (9.3,2) -- (8.5, 2);

  \node (1) at (4,2.75) [small] {\tiny1};
  \node (2) at (4.5,3.25) [small] {\tiny1};
  \draw (1) -- (2);
  \draw [->, thick, >=stealth] (4.7,3) -- (5.5, 3);
  \end{tikzpicture}

 \caption {Graph of clique-width 3, and a possible 3-expression tree (the last renaming operations are omitted).}
\end{figure}

Fellows et al.~proved the NP-hardness of computing the minimum clique-width for general graphs \cite{bib9}. The current best approximation is due to Oum and Seymour \cite{bib15}, who provided an algorithm in linear time that, given a graph $G$ and an integer $c$ as input, returns an $2^{3c+2}$-expression for $G$ or certifies that the graph has a clique-width larger than $c$.

This implies that we can compute in quadratic time a $2^{3k+2}$-expression for a graph of clique-width $k$ by applying this algorithm for $c=1,2\dots$. As the bound is independent of $n$, algorithms requiring expressions as input will still be in polynomial time, although the time complexity will usually be extremely poor. For $(5,2)$-crossing-chordal graphs, though, this is not a concern since it is possible to compute a 3-expression in linear time \cite{bib3}.

An $l$-expression is called {\em irredundant} if every edge-creation operation $\eta_{i,j}$  is applied to a graph where no pair of vertices in $G_i$ and $G_j$ are adjacent. Any $l$-expression can be turned into an $l$-irredundant expression in linear time \cite{bib7}. Therefore, we can assume w.l.o.g. that the input expression is irredundant.

\section{Framework of Our Algorithms}

The input of our algorithms is a graph $G$ on $n$ vertices and an $l$-expression for $G$, and the output is the number of objects (ex. matchings, paths) in $G$. The procedure works by counting these objects at each step of the construction, by using the expression tree : we start from the leaves and process a node once all its children have been processed. Finally, the value at the root of the tree is the output of the algorithm. Instead of doing it directly with the considered object, we introduce appropriate intermediate objects, and we compute tables of values at each step. 

To avoid tedious case studies, we shall assume that requesting the value of
 any vector outside of the range $\{0\dots n\}$ returns the value 0. Also, $\Delta_r(l)$ is the vector $(\delta_{i,r})_{1\leq i\leq l}$,
 and $\Delta_{r,s}(l)$ is the vector $(\delta_{i,r}\cdot\delta_{j,s})_{\substack{0\leq i\leq j\leq l\\(i,j)\not = (0,0)}}$, where $\delta_{i,j}$ is the {\em Kronecker delta}:\[\delta_{i,j} = \left\{\begin{array}{rl}1&\mbox{if }i=j\\0&\mbox{otherwise}\end{array}\right.\]We will omit the $l$ when it is obvious in context.
\section{Counting Maximal Matchings}

\begin{theorem}
 Computing the number of maximal matchings of a graph with $n$ vertices with a corresponding $l$-expression can be done in polynomial time in $n$ (but exponential w.r.t $l$).
\end{theorem}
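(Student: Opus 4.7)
The plan is a bottom-up dynamic program along the $l$-expression tree, enriching the counted object at each subexpression to a \emph{matching-cover pair}. For a subgraph $H$ produced by a subexpression, I define a matching-cover pair as $(M,C)$ where $M$ is a matching of $H$, $C\subseteq V(H)\setminus V(M)$, and $V(M)\cup C$ is a vertex cover of $H$. The intended semantics is: if $M^*$ is a maximal matching of the final graph $G$, set $M = M^*\cap E(H)$ and let $C$ be the set of $H$-vertices matched by $M^*$ through an edge not in $E(H)$ (to a vertex outside $V(H)$, or via an edge introduced by a future $\eta$-operation). The vertex-cover condition is precisely the local footprint of the maximality of $M^*$. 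Vertices are then classified as \emph{matched}, \emph{cover}, or \emph{free}, and I record only the profile $(c_i, f_i)_{1\le i\le l}$ of cover and free counts per label; the matched counts are unnecessary because no later operation depends on them.

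Let $T_H[(c,f)]$ count matching-cover pairs at $H$ with the given profile. The output of the algorithm is $\sum_f T_G[(\mathbf{0},f)]$, since $C = \emptyset$ forces $V(M)$ to be a vertex cover of $G$, equivalent to $M$ being maximal; conversely each maximal matching of $G$ restricts uniquely to a consistent sequence of matching-cover pairs that reaches such a terminal state. At $S_i$ there are two matching-cover pairs of count $1$: the vertex is in $C$ ($c_i = 1$) or free ($f_i = 1$). Disjoint union convolves the two child tables (profiles add). Renaming $\rho_{i\to j}$ merges the $i$-coordinates into the $j$-coordinates.

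The crucial and most delicate transition is $\eta_{i,j}$, which adds no vertices but, by irredundancy, inserts every missing $i$-$j$ edge at once. A \emph{new} matching edge $(u,v)$ with $u$ labeled $i$ and $v$ labeled $j$ must join two previously-cover vertices, since a free vertex can never enter $M^*$ and a matched vertex is already saturated. A new edge \emph{not} in $M^*$ must, by maximality, have an endpoint matched or cover after the operation; since free vertices remain free throughout, this forces $f_i = 0$ or $f_j = 0$. The transition then sums over $k = 0,\dots,\min(c_i,c_j)$ (the number of new matching edges) with multiplicity $\binom{c_i}{k}\binom{c_j}{k}k!$, updating $c_i\mapsto c_i - k$ and $c_j\mapsto c_j - k$ while leaving all other coordinates fixed.

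The state space has $O(n^{2l})$ entries, the disjoint-union convolution dominates at $O(n^{4l})$ per node, and with $O(n)$ nodes in the expression tree the total time is $O(n^{4l+1})$, polynomial in $n$ for fixed $l$. The main obstacle is the $\eta_{i,j}$ step: I must verify carefully that the three exhaustive cases for a new edge---matching edge joining two cover vertices, non-matching edge covered by a matched or cover endpoint, and forbidden free-free pair---are correctly and bijectively accounted for, so that maximal matchings of $G$ are in weighted bijection with the DP-terminal states.
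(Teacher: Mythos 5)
Your proposal is correct and follows essentially the same route as the paper: the same notion of matching-cover pair, the same bottom-up dynamic program over the expression tree, the same $\binom{c_i}{q}\binom{c_j}{q}\,q!$ multiplicity at edge creation, and the same $O(n^{4l+1})$ bound. Tracking $(c_i,f_i)$ rather than the paper's $(m_i,c_i)$ is immaterial, since the number of vertices of each label in a subexpression is known, so either pair of coordinates determines the third. The one substantive difference is in your favour: at the $\eta_{i,j}$ step you explicitly require $f_i=0$ or $f_j=0$, i.e.\ that no newly created edge joins two free vertices. This constraint is genuinely necessary for the table entries to count matching-cover pairs rather than arbitrary (matching, vertex subset) pairs; without it the algorithm degenerates into counting all matchings (already on $K_2=\eta_{1,2}(S_1\oplus S_2)$ the unconstrained transition accepts the pair $(\emptyset,\emptyset)$ at the edge-creation node and returns $2$ instead of $1$). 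The paper's displayed formula for the edge-creation operation omits this check, so your write-up supplies a condition the paper's proof leaves implicit at best. The only minor imprecision on your side is the claim of $O(n)$ nodes in the expression tree: an irredundant expression can have $\Theta(n^2)$ edge-creation nodes, but since each such transition costs only $O(n)$ per table entry, the disjoint unions still dominate and the $O(n^{4l+1})$ total stands.
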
 

We cannot directly use the previous framework on maximal matchings. Indeed, consider $M$ a maximal matching of $G = \eta_{i,j}(G')$ and $M'$ the
 induced matching in $G'$: $M'$ is not necessarily maximal. However, we can keep track of the vertices of $G'$ that are covered in $M$, and those vertices must form a vertex cover of the subgraph left uncovered by $M'$. See Fig.2 for an example.\\

\begin{figure}[t]
 \centering
  \begin{tikzpicture}
  [style/.style={circle,draw = black, inner sep=1pt,minimum size=3.5mm},
   tvick/.style={circle, very thick, draw = black, inner sep=1pt,minimum size=3.5mm}]
  \node (0) at (0,6) [style] {\small3};
  \node (10) at (2,6) [style] {\small2};
  \node (11) at (2,4.5) [style] {\small2};
  \node (12) at (2,3) [style] {\small2};
  \node (13) at (2,1.5) [style] {\small2};
  \node (14) at (2,0) [style] {\small2};
  \node (21) at (4,4.5) [style] {\small1};
  \node (22) at (4,3) [style] {\small1};
  \node (23) at (4,1.5) [style] {\small1};
  \node (24) at (4,0) [style] {\small1};
  \draw (0) -- (10) -- (21) -- (11) -- (22) -- (12) -- (23) -- (14) -- (21) -- (13) -- (24) -- (12) -- (11) -- (24) -- (10) -- (23) -- (22) -- (13) -- (12) -- (21) (10) -- (22) -- (14) (23) -- (11);
  \draw [very thick] (14) -- (24) (13) -- (23) (10) -- (11) (21) -- (22);
  \draw (11) to [bend right = 20] (13);

  \node (0) at (7,6) [style] {\small3};
  \node (10) at (9,6) [style] {\small2};
  \node (11) at (9,4.5) [style] {\small2};
  \node (12) at (9,3) [style] {\small2};
  \node (13) at (9,1.5) [tvick] {\small2};
  \node (14) at (9,0) [tvick] {\small2};
  \node (21) at (11,4.5) [style] {\small1};
  \node (22) at (11,3) [style] {\small1};
  \node (23) at (11,1.5) [tvick] {\small1};
  \node (24) at (11,0) [tvick] {\small1};
  \draw (0) -- (10) (11) -- (12) -- (13) (22) -- (23);
  \draw [very thick] (10) -- (11) (21) -- (22);
  \draw (11) to [bend right = 20] (13);
  \end{tikzpicture}
\caption{Maximal matching of $\eta_{1,2}(G')$, and the corresponding matching-cover pair of $G'$.}
\end{figure}
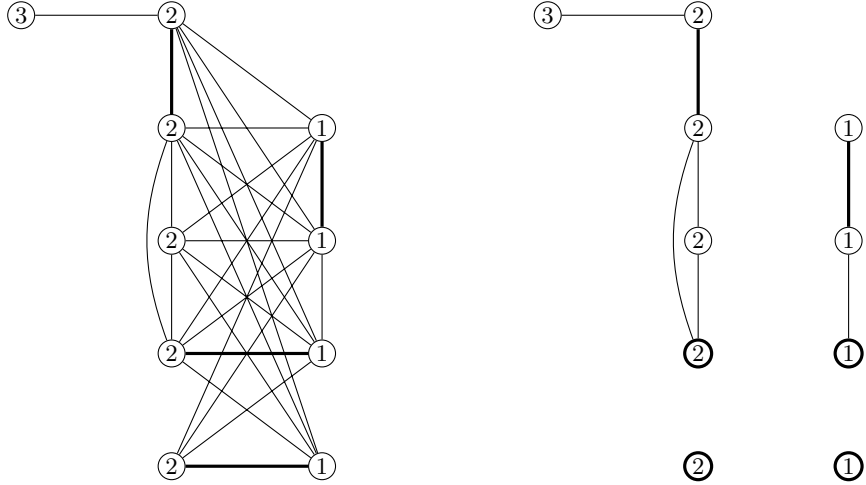

A {\em matching-cover} pair of a graph $G = (V,E)$ is a pair $(m,c)$ such that:
\begin{itemize}
 \item $m\subseteq E$ is a matching of $G$ (i.e. no vertex is covered more than once);
 \item $c\subseteq V$ is a vertex cover of the subgraph left uncovered by $m$ (i.e. every {\em edge} is covered at least once).
\end{itemize}
We show that computing the number of matching-cover pairs of a graph with $n$ vertices with a corresponding $l$-expression can be done in polynomial time in $n$.\\

 Let $M = (m_i)_{1\leq i \leq l}$ and $C= (c_i)_{1\leq i\leq l}$ be two vectors of non-negative integers. For a graph $G$, we say that a pair $(m,c)$ satisfies the condition $\varphi_{M,C}(G)$ if $m$ covers $m_i$ vertices in $G_i$ and $c$ uses $c_i$ vertices in $G_i$ for all $i$, and we denote by $mc_{M,C}(G)$ the number of pairs that satisfy $\varphi_{M,C}(G)$. Note that maximal matchings are exactly pairs with an empty cover; therefore, the number of maximal matchings of $G$ is $\sum_{k\leq n} mc_{k\cdot\Delta_1,0}(G)$.

Now we will follow the framework described above and compute $mc_{M,C}$ for all possible $M$ and $C$, at each step of the construction. We associate to each node of the tree a table of size $n^{2l}$ corresponding to the values of $mc_{M,C}$ on this graph for $M$ and $C$ ranging from $(0,..,0)$ to $(n,..,n)$. For a singleton $S_i$, we can easily see that:

\[ mc_{M,C}(S_i) =\left\{\begin{array}{rl}
1 & \mbox{if } M = 0\mbox{ and }C = 0 \mbox{ or }\Delta_i \\
0 & \mbox{otherwise}
\end{array}\right. \]

For the renaming operation $G = \rho_{i\rightarrow j}(G')$, the graph is not modified, but all vertices of label $i$ are set to label $j$. Hence, we modify the entries $i$ and $j$ accordingly. 

\[mc_{M,C}(G)=\sum_{\substack{M':(M,M')\vdash \phi_{i,j}\\C':(C,C')\vdash \phi_{i,j}}} mc_{M',C'}(G')\]

\[ \mbox{where }(X,X')\vdash\phi_{i,j} \Leftrightarrow \left(\begin{array}{l}
x_j = x'_i+x'_j\\
x_i=0\\
\forall k \not\in \{i,j\}, x_k=x'_k
\end{array}\right)\]

For the disjoint union of two graphs $G=G_1\oplus G_2$, we have a bijection between matching-cover pairs $(m,c)$ in $G$ and pairs $(m_1,c_1),(m_2, c_2)$ of matching-cover pairs in $G_1$ and $G_2$, respectively. Moreover, if $(m,c)$ satisfies $\varphi_{M,C}$, $(m_1,c_1)$ satisfies $\varphi_{M_1,C_1}$ and $(m_2, c_2)$ satisfies $\varphi_{M_2,C_2}$, we have $M=M_1+M_2$ and $C=C_1+C_2$. 

\[mc_{M,C}(G)=\sum_{\substack{M_1+M_2=M\\C_1+C_2=C}} mc_{M_1,C_1}(G_1)\cdot mc_{M_2,C_2}(G_2)\]

For the edge creation operation $G=\eta_{i,j}(G')$, we have to choose the extremities of the edges added to the matching among the vertices in the vertex cover. If $q$ is the number of new edges, we have:

\[mc_{M,C}(G)=\sum_{q=0}^n mc_{M',C'}(G')\cdot
 \left( \begin{array}{l}
 c'_i\\q
 \end{array} \right)
 \cdot
 \left( \begin{array}{l}
 c'_j\\q
 \end{array} \right)\cdot q!
\] 

\[\mbox{where } M'=M-q\Delta_i-q\Delta_j,\ C'=C+q\Delta_i+q\Delta_j\]

Once the maximal matchings of all sizes are computed, it is straightforward
 to count the number of perfect matchings and the number of minimum maximal matchings in polynomial time.
Note that counting perfect matchings can be achieved in $O(n^{2l+1})$ time by adapting the matching counting algorithm presented in \cite{bib12} in a similar fashion.\\

{\bf Complexity study:}
Obviously, there are exactly $n$ singleton operations, and each operation requires a constant amount of time. Every other operation requires one to compute $n^{2l}$ values.
 As the expression is irredundant, every edge creation operation adds at least one edge, so there are at most $n^2$ edge creation operations, processed in linear time. As a disjoint union operation has two children in the tree, and there are $n$ leaves, there are $n-1$ disjoint union operations, and they require $O(n^{2l})$ time.

For the renaming operation, consider the number of different labels at each step of the construction. This number is one for a singleton, the edge creation operation has no effect, the disjoint union is an addition in the worst case (no shared label) and the renaming operation diminishes this number by one. Therefore, there are at most $n$ renaming operations, and they are done in $O(n^4)$ time. The final sum requires $O(n^l)$ operations.\\

Therefore, the overall complexity of the algorithm is 
\[O(n)+O(n^{2l}) \cdot \left(O(n^5)+O(n^{2l+1})+ O(n^3)\right)+ O(n^l)= O(n^{4l+1})~(l\geq 2).\]
For $(5,2)$-crossing-chordal graphs, we can compute an expression of width $l=3$ in linear time and the algorithm runs in time $O(n^{13})$.

\section{Counting paths and path matchings}
A {\em path matching} (or {\em linear forest}) is a disjoint union of paths, in other words, a cycle-free set of edges such that no vertex is covered more than twice.

\begin{theorem}
Computing the number of paths $pth(G)$ and the number of path matchings
 $pm(G)$ of a graph of clique-width $\leq k$ can be done in polynomial
 time (but exponential w.r.t. $k$).
\end{theorem}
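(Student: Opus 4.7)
The plan is to mirror the dynamic-programming framework of the maximal-matching section: define a polynomial-size collection of equivalence classes of path matchings, tabulate along the $l$-expression tree the number of path matchings in each class at each node, and read off $pm(G)$ and $pth(G)$ at the root. For a labeled graph $G'$ and a path matching $P$ of $G'$, I would define the class of $P$ as the tuple $(a_i,b_i,p_{\{i,j\}})_{1\leq i\leq j\leq l}$, where $a_i$ counts label-$i$ vertices not covered by $P$, $b_i$ counts label-$i$ vertices of $P$-degree $2$, and $p_{\{i,j\}}$ counts paths of $P$ whose two endpoints carry labels $i$ and $j$. Each coordinate is bounded by $n$, so there are $O(n^{l(l+5)/2})$ classes, polynomial for fixed $l$. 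The key equivalence lemma is that two path matchings $P_1,P_2$ of $G'$ with identical classes behave identically under future construction steps: there is a label-preserving vertex bijection $\phi$ sending $P_1$ to $P_2$ path-by-path, and for any edge-creation step $\eta_{i,j}$ and any valid extension $S$ of $P_1$, the set $\phi(S)$ is a valid extension of $P_2$ landing in the same class; hence DP transitions depend only on the two classes. Singleton, renaming and disjoint-union transitions then follow the matching-cover recipe: $S_i$ has class $a_i=1$; $\rho_{i\to j}$ merges the $i$-indexed coordinates into the $j$-indexed ones, with $p_{\{i,i\}}$ and $p_{\{i,j\}}$ both flowing into $p_{\{j,j\}}$; and $G_1\oplus G_2$ convolves the two class tables.

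The real work is the edge-creation step $G=\eta_{i,j}(G')$. A path matching of $G$ decomposes uniquely as $P'\cup S$ with $P'$ a path matching of $G'$ and $S\subseteq V_i\times V_j$ a set of newly created edges, and is a path matching iff all degrees stay $\leq 2$ and $P'\cup S$ is acyclic. I would parameterise $S$ by a move profile listing how many isolated label-$i$ vertices receive $1$ or $2$ new edges, how many $V_i$-endpoints of each path type receive a new edge, and analogously for $V_j$; the per-vertex choices then contribute standard multinomial factors. The delicate part is acyclicity: cycles in $P'\cup S$ correspond exactly to cycles, self-loops and multi-edges in the meta-multigraph whose vertices are the components of $P'$ (isolated vertices and $P'$-paths) and whose edges are the elements of $S$, so counting the valid $S$'s realising a given profile reduces to counting simple bipartite linear forests on the meta-vertices with a prescribed degree sequence. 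This is a polynomial-size enumeration over the roles played by $S$-edges---creating new paths, extending existing ones, merging pairs of paths, or internalising former isolated vertices---and it simultaneously determines the target class.

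Once the table $N(X)$ is known at the root I would extract $pm(G)=\sum_X N(X)$, and compute $pth(G)$ as the sum of $N(X)$ over those classes $X$ with exactly one nonzero path coordinate $p_{\{i,j\}}=1$, since a simple path of $G$ is exactly a path matching whose unique nontrivial component is that path. The total running time is the number of classes times the per-transition cost, polynomial in $n$ and exponential in $l$, as announced. The main obstacle is the edge-creation step: one has to juggle the degree cap, the structural reorganisation of paths (extension, merging, internalisation of former isolated vertices), and above all cycle avoidance, while keeping the transition count a function of the two classes only. The equivalence lemma pins the state space down to polynomial size, and the meta-graph reduction is what tames the cycle combinatorics.
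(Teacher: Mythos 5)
Your proposal follows essentially the same route as the paper: the same equivalence classes (uncovered vertices per label and paths per endpoint-label pair --- your extra coordinate $b_i$ for degree-$2$ vertices is harmless but redundant, since interior vertices are inert under every later operation), the same tree-walking dynamic program with the same singleton/renaming/union transitions, and the same key claim that the $\eta_{i,j}$-transition count depends only on the two classes. Two caveats. First, your equivalence lemma is stated too strongly: a label-preserving bijection carrying $P_1$ to $P_2$ \emph{path by path} need not exist, because two paths with the same endpoint labels can have different interior label sequences. What is true, and all you need, is a bijection matching uncovered vertices label-wise and matching paths by their endpoint-label type; this suffices precisely because interior vertices cannot receive new edges. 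Second, the edge-creation count --- the entire content of the paper's appendix --- is where your sketch stops short. Your meta-graph observation is correct (a valid $S$ is exactly one whose meta-multigraph on the components of $P'$ is a simple linear forest, so cycles, self-loops and multi-edges must be excluded), but counting such linear forests with a prescribed degree profile is not a mere ``polynomial-size enumeration over roles'': chains of merges interact, and one must prevent an edge from closing a path onto its own (possibly updated) partner. The paper resolves this with an explicit inner dynamic program that consumes the vertices of $G_i$ one at a time in the order $V_i(j), V_i(0), \dots, V_i(l)$, tracking the evolving partner structure and inserting a $(v_{i,j}-1)$ correction to forbid self-loops; your formulation would need an analogous argument (note also that your meta-graph is not bipartite at the component level, since an $(i,j)$-path exposes one slot on each side). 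These are completable, but they are the actual technical heart of the theorem.
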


\begin{proof}
Let $K = (k_{i,j})_{\substack{0\leq i\leq j\leq l\\(i,j)\not = (0,0)}}$ be a vector of non-negative integers. We say that a path matching $P$ of $G$ satisfies the condition $\psi_K$ if:
\begin{itemize}
\renewcommand{\labelitemi}{-}
\item $\forall i>0, k_{0,i}$ vertices in $G_i$ are left uncovered by $P$;
\item $\forall (i,j),i\leq j$, $k_{i,j}$ paths in $P$ have extremities in $G_i$ and $G_j$.
\end{itemize}

We denote the number of path matchings in $G$ satisfying $\psi_K$ by $pm_K(G)$. If $i>j$, we denote $k_{i,j}= k_{j,i}$. As $K$ is of size $\frac{l(l+3)}{2}$, we compute tables of size $n^{\frac{l(l+3)}{2}}$ at each step.

For a singleton $S_i$, the only possible path matching is empty and leave the vertex uncovered.

\[\forall K, pm_K(S_i) =\left\{\begin{array}{rl}
1 & \mbox{ if }K =\Delta_{0,i}\\
0 & \mbox{otherwise}
\end{array}\right. \]

For the renaming operation  $G= \rho_{i\rightarrow{} j}(G')$, the method is the same as for maximal matchings.
\[pm_K(G) = \sum_{K':(K,K')\vdash \phi} pm_{K'}(G')\]

\[\mbox{where }(K,K')\vdash \phi \Leftrightarrow \left(\begin{array}{l}
k_{j,j}=k'_{j,j}+k'_{i,j}+k'_{i,i}\\
\forall a \not\in \{i,j\}), k_{a,j} = k'_{a,i}+k'_{a,j}\\
\forall a, k_{a,i} = 0\\
\forall a \not\in \{i,j\}, b \not\in \{0,i,j\}, k_{a,b}=k'_{a,b}
\end{array}\right)\]

For the disjoint union operation $G = G_1 \oplus G_2$, we have a bijection between path matchings $p$ in $G$ and pairs $(p_1, p_2)$ of path matchings in $G_1$  and $G_2$, respectively. Plus, if $p_1$ satisfies $\psi_{K_1}$, $p_2$  satisfies $\psi_{K_2}$ and $p$ satisfies $\psi_K$, we have $K=K_1+K_2$. 

\[pm_K(G) = \sum_{K_1+K_2=K} pm_{K_1}(G_1)\cdot pm_{K_2}(G_2)\]

Consider now the edge creation operation $G=\eta_{i,j}(G')$. We say a path matching $P$ in $G$ is an {\em extension} of a path matching $P'$ in $G'$ if $P\cap G'=P'$, so that $P=P'\cup E_{i,j}$ where $E_{i,j}$ is a subset of the edges added by the operation. Now, if we consider a path matching $P'$ in $G'$ that satisfies $\psi_{K'}$, we claim that the number of extensions of $P'$ in $G$ that satisfy $\psi_K$ depends only on $i,j,K'$ and $K$ (and not on $P'$ or $G'$), and we represent it as $N_{i,j}(K,K')$. Since every path matching of $G$ is an extension of an unique path matching of $G'$, we have:

\[ pm_K(G)=\sum_{K'} pm_{K'}(G')\cdot N_{i,j}(K', K) \]

Moreover, we can compute all the $N_{i,j}(K',K)$ beforehand in $O(n^{l(l+4)})$ time. The proof of these claims is given in the appendix.

We can then compute the number of paths $pth(G)$ and the number of path matchings $pm(G)$ with the formulas:
\[\begin{array}{rrlrl}
pth(G) =&\displaystyle \sum_{0\leq a \leq n}&pm_{K(a)}(G) &\mbox{where}&K(a)=a\cdot \Delta_{0,1}+\Delta_{1,1}\\
pm(G) =&\displaystyle \sum_{1\leq a+2b \leq n} &pm_{K(a,b)}(G) &\mbox{where}&K(a,b)=a\cdot \Delta_{0,1}+b\cdot \Delta_{1,1} 
\end{array}\]

\end{proof}

{\bf Complexity study:} A singleton operation requires constant time. Every other operation requires us to compute $n^{\frac{l(l+3)}{2}}$ values. For each value, the renaming operation in processed in linear time, the disjoint union operation in $O(n^{l^2})$ time and the edge creation operation in $O(n^{\frac{l(l+3)}{2}})$ time.

The overall complexity of the algorithm is:
\[\left\{\begin{array}{rl}
O(n^{l^2+4l})&\mbox{for }l\leq 5 \\
O(n^{\frac{3}{2}(l^2+l)+1}) & \mbox{for } l>5
\end{array}\right.\]

For $(5,2)$-crossing-chordal graphs, we can compute in linear time an expression of width $l=3$ and we have an algorithm running in $O(n^{21})$ time.

\section{Conclusion}

  These results seem to confirm the intuition that bounding clique-width is an efficient restriction on the input of $\#P$-hard problems in order to allow the use of polynomial algorithms. Notably, being able to count paths and path matchings in polynomial time is interesting because connected structures are usually very difficult to count. In that sense, the next logical step was to study the tree (or, equivalently, forest) counting problem. However, our attempts to do so by using a method similar to the one we used in the paper, only produced algorithms running in exponential time. Our feeling is that the tree counting problem remains $\#P$-complete for graphs of bounded clique-width, as this intuitive method keeps giving bad results. It remains an open problem for now.

\begin{center} {\huge{\textbf{Appendix.}}} \end{center}~\\

We now prove the case of Thm.2 we have omitted. Let $G$ and $G'$ be two labeled graphs such that $G=\eta_{i,j}(G')$ (for some $i<j$) and $P'$ a path matching of $G'$ satisfying $\psi_{K'}$ for some $K'$. For any $K$, we want to compute the number of extensions of $P'$ in $G$ satisfying $\psi_K$.\\

\textbf{Definitions.} For any path matching satisfying $\psi_{K}$, a path with two extremities $x\in G_i$ and $y\in G_j$ is called an {\em $(i,j)$-path}, and $x$ and $y$ are called {\em partners}. We denote by $V_a(b)$ the vertices of $G_a$ whose partner is in $G_b$, and by $V_a(0)$ the uncovered vertices in $G_a$. We also note $v_{a,b}=\# V_a(b)$, which means that $v_{a,b} = k_{a,b}$, except for $v_{a,a}=2k_{a,a}$ (note that $v_{a,b}$ depend only on $K$). An edge which extremities are in $V_i(a)$ and in $V_j(b)$, respectively, is called an {\em $(a,b)$-edge}.\\

We use a dynamic programming technique to build all possible extensions of $P'$ by considering each vertex of $G'_i$ one by one in the order $V_i(j),V_i(0),..,V_i(l)$ (the reason for this order will be explained later). If $X=(x_0,..,x_l)$ is a vector of non-negative integers and $P_1$ a path matching in $G_1$ that satisfies $\psi_{K_1}$, $T_{i,j}(G_1,P_1,K_2,X)$ stands for the number of extensions of $P_1$ in $\eta_{i,j}(G_1)$ that satisfy $\psi_{K_2}$ and that uses only the $x_k$ last vertices of $V_i(k)$ for every $k$.\\

At each step of the computation, the equations show that knowing $G_1$ and $P_1$ is not necessary as long as $\psi_{K_1}$ is satisfied: this proves our first claim, and we write $T_{i,j}(K_1,K_2,X)$ for $T_{i,j}(G_1,P_1,K_2,X)$. Also, since $i,j$ and $K_2$ are not modified during the computation, we write $T(K_1, X)$ for $T_{i,j}(K_1,K_2,X)$.\\

We now detail the different steps by increasing difficulty (instead of the actual order of the algorithm). First, assume that $x_j = x_0 =..=x_{k-1} = 0$ and $x_k \not = 0$ (for some $k \not = i)$. We consider the first vertex in $V_i(k)$ that has not been considered yet in the computation. We have only two possibilities:
\begin{itemize}
\renewcommand{\labelitemi}{-}
\item No new edge adjacent to this vertex is added to the path matching.
\item One new $(k,a)$-edge (possibly $a=0$) is added to the path matching: we have $v_{j,a}$ choices for the edge. A $(i,k)$-path and a $(j,a)$-path are transformed into a $(k,a)$-path.
\end{itemize}

In each case, the value of the current $K$ is updated accordingly and the vertex is deleted from $X$. Next to each term is the set that contains the other extremity of the edge being considered.

\[\begin{array}{rlc}
T(K",X) = &T(K",X - \Delta_k)&\emptyset\\
&+ \displaystyle\sum_{1\leq a \leq l} v_{j,a}.T(K" - \Delta_{i,k} - \Delta_{j,a}+\Delta_{k,a}, X - \Delta_k)&V_j(a)\\
&+v_{0,j}.T(K"-\Delta_{i,k} -\Delta_{0,j} + \Delta_{j,k}, X-\Delta_k)&V_j(0)
\end{array}\]

Note that if a $(k,i)$-edge is added, the partner of another vertex of $V_i(j)$ is also modified: this is why $V_i(j)$ is considered first in the computation, so that it does not appear in $X$ anymore at this step. This remark holds for all the other cases except for $k= j$.\\

Now, we consider the first step ($k=j$). The situation is similar, but the vertex cannot be linked to its own partner when $k'=i$. Note that adding a $(j,i)$-edge changes the partner of another vertex of $V_i(j)$, but the new partner is still in $G_j$, so doing this brings no modification to $X$.

\[\begin{array}{rlc}
T(K" ,X) =&  T(K", X - \Delta_j)&\emptyset\\
 &+ \displaystyle\sum_{\substack{1\leq a\leq l\\ a\not = i}} v_{j,a} T(K"- \Delta_{i,j}, X - \Delta_j)&V_j(a)\\
 &+ (v_{i,j} - 1) T (K" - \Delta_{i,j}, X  -\Delta_j)&V_j(i)\\
 &+  v_{0,j}.T(K"- \Delta_{i,j} - \Delta_{0,j} + \Delta_{j,j}, X - \Delta_j)&V_j(0)
\end{array}\]

For the uncovered vertices ($k=0$), up to two edges can be added to the path matching. The possibilities are:

\begin{itemize}
\renewcommand{\labelitemi}{-}
\item  No new edge adjacent to this vertex is added to the matching.
\item  One new $(k,k')$-edge is added to the matching: we have $v_{j,k'}$ choices for the edge. An uncovered vertex and a $(j,k')$-path are transformed into a $(x,k')$-path.
\item Two new $(k,k')$ and $(k,k")$-edges are added to the matching: we have $v_{j,k'}. v_{j,k"}$ choices for the two edges (only half of those when $k'= k"$). An uncovered vertex, a $(j,k')$-path and a $(j,k")$-path are transformed into a $(k',k")$-path.
\end{itemize}
\[\begin{array}{rlc} T (K", X) = & T(K", X - \Delta_0)&\emptyset\\
&+ \displaystyle\sum_{1\leq a \leq l} v_{j,a}.T(K"-\Delta_{0,i}- \Delta_{j,a}+\Delta_{i,a},X-\Delta_0)&V_j(a)\\
&+\ v_{0,j}.T(K"-\Delta_{0, i}-\Delta_{0,j} + \Delta_{i,j}, X-\Delta_0)&V_j(0)\\
&+ \displaystyle\sum_{1\leq a<b\leq l} v_{j,a}.v_{j,b}.T(K"-\Delta_{0,i}-\Delta_{j,a}-\Delta_{j,b}+\Delta_{a,b},X-\Delta_0)&V_j(a)\ |\ V_j(b)\\
&+ \displaystyle\sum_{1\leq a \leq l} v_{j,a}.v_{0,j}.T(K"-\Delta_{0,i}-\Delta_{0,j},X-\Delta_0)&V_j(a)\ |\ V_j(0)\\
&+ \displaystyle\sum_{\substack{1\leq a \leq l \\ a \not =j}}\frac{v_{j,a}.(v_{j,a}-1)}{2}.T(K"-\Delta_{0,i}-2\Delta_{j,a}+\Delta_{a,a}, X -\Delta_0)&V_j(a)\ |\ V_j(a)\\
&+\ \displaystyle\frac{v_{0,j}.(v_{0,j}-1)}{2}.T (K"-\Delta_{0,i}-2\Delta_{0,j}+\Delta_{j,j},X-\Delta_0)&V_j(0)\ |\ V_j(0)\\
&+\ \displaystyle \frac{v_{j,j}.(v_{j,j}-2)}{2}.T (K" - \Delta_{0,i}-\Delta_{j,j}, X-\Delta_0)&V_j(j)\ |\ V_j(j)
\end{array}\]

For $k=i$ , we consider the two extremities of the $(i,i)$-path at the same time. Therefore, this situation is similar to the previous one, except that we have to choose one of the extremities in each case. There are twice as many possibilities as in the previous case.

\[\begin{array}{rlc}
T(K", X) = & T(K", X - 2\Delta_i)&\emptyset\\
&+\displaystyle \sum_{1\leq a \leq l} 2v_{j,a}.T(K"-\Delta_{i,i}- \Delta_{j,a}+\Delta_{i,a},X-2\Delta_i)&V_j(a)\\
&+\ 2v_{0,j}.T(K"-\Delta_{i,i}-\Delta_{0,j} + \Delta_{i,j},X-2\Delta_i)&V_j(0)\\
&+\displaystyle \sum_{1\leq a<b\leq l} 2v_{j,a}.v_{j,b}.T(K"-\Delta_{i,i}-\Delta_{j,a}-\Delta_{j,b}+\Delta_{a,b},X-2\Delta_i)&V_j(a)\ |\ V_j(b)\\
&+\displaystyle \sum_{1\leq a \leq l} 2v_{j,a}.v_{0,j}.T(K"-\Delta_{i,i}-\Delta_{0,j},X-2\Delta_i)&V_j(a)\ |\ V_j(0)\\
&+\displaystyle \sum_{\substack{1\leq a\leq l\\ a\not =j}} v_{j,a}.(v_{j,a}-1).T(K"-\Delta_{i,i}-2\Delta_{j,a}+\Delta_{a,a}, X -2\Delta_i)&V_j(a)\ |\ V_j(a)\\
&+\ v_{0,j}.(v_{0,j}-1).T (K"-\Delta_{i,i}-2\Delta_{0,j}+\Delta_{j,j},X- 2\Delta_i)&V_j(0)\ |\ V_j(0)\\
&+\ v_{j,j}.(v_{j,j}-2).T (K" - \Delta_{i,i}-\Delta_{j,j}, X-2\Delta_i)&V_j(j)\ |\ V_j(j)
\end{array}\]

Now, if we have $\forall k,x_k  = 0$, then all the vertices have been considered and:
\[T_{i,j}(G_1, K_1, K_2,0) = \left\{\begin{array}{rl}
1 & \text{if } K_1 = K_2\\
0 & \text{otherwise}
\end{array}\right.\]

The table of all possible $T_{i,j}(K_1, K_2, X)$ is of size $l^2.n^{l(l+4)}$. Using the previous equations, we can compute the table by increasing $X$ in $O(n^{l(l+4)})$ operations (individual equations are independent of $n$). We now have $N_{i,j}(K',K) = T_{i,j}(K',K,X)$ where $\forall k\neq i,x_k = k'_{i,k}$ and $x_i = 2k'_{i,i}$.
\end{document}